
\documentclass[10pt,letterpaper]{article}


\usepackage[dvips,letterpaper,margin=1in]{geometry}

\usepackage{amsmath,amssymb}

\usepackage{changepage}

\usepackage[utf8x]{inputenc}

\usepackage{textcomp,marvosym}

\usepackage{cite}

\usepackage{nameref,hyperref}

\usepackage[right]{lineno}

\usepackage{microtype}
\DisableLigatures[f]{encoding = *, family = * }

\usepackage[table]{xcolor}

\usepackage{array}

\usepackage{bm}
\usepackage{nicefrac}
\usepackage[linesnumbered,boxed]{algorithm2e} 
\usepackage{mathtools}
\usepackage{amsfonts}
\usepackage{amsthm}
\usepackage[capitalise]{cleveref}
\usepackage{booktabs}
\usepackage{placeins}

\newcolumntype{+}{!{\vrule width 2pt}}

\newlength\savedwidth



\usepackage[aboveskip=1pt,labelfont=bf,labelsep=period,justification=raggedright,singlelinecheck=off]{caption}

\bibliographystyle{plos2015}

\makeatletter
\renewcommand{\@biblabel}[1]{\quad#1.}
\makeatother

\usepackage{lastpage,fancyhdr,graphicx}
\usepackage{epstopdf}
\pagestyle{fancy}
\fancyhf{}
\rfoot{\thepage/\pageref{LastPage}}

\fancyheadoffset[L]{2.25in}
\fancyfootoffset[L]{2.25in}
\lfoot{\today}


\renewcommand{\b}{\bm}

\newcommand{\bR}{\mathbb{R}}

\newcommand{\cI}{\mathcal{I}}
\newcommand{\cJ}{\mathcal{J}}

\newcommand{\cS}{\mathcal{S}}

\newtheorem{lemma}{Lemma}

\newtheorem{definition}{Definition}

\newtheorem{example}{Example}


\begin{document}
\vspace*{0.2in}

\begin{flushleft}
{\Large
\textbf\newline{Minimizing the number of optimizations for efficient community dynamic flux balance analysis.} 
}
\newline
\\
James D. Brunner\textsuperscript{1\dag*},
Nicholas Chia\textsuperscript{1\dag}
\\
\bigskip
\textbf{1}Department of Surgery, Center for Individualized Medicine Microbiome Program, Mayo Clinic, Rochester, MN, USA
\bigskip

%
%

%
%
%

\dag Current Address: Mayo Clinic, 200 First St. SW, Rochester, MN, USA 

* brunner.james@mayo.edu

\end{flushleft}
\section*{Abstract}
Dynamic flux balance analysis uses a quasi-steady state assumption to calculate an organism's metabolic activity at each time-step of a dynamic simulation, using the well-known technique of flux balance analysis. For microbial communities, this calculation is especially costly and involves solving a linear constrained optimization problem for each member of the community at each time step. However, this is unnecessary and inefficient, as prior solutions can be used to inform future time steps. Here, we show that a basis for the space of internal fluxes can be chosen for each microbe in a community and this basis can be used to simulate forward by solving a relatively inexpensive system of linear equations at most time steps. We can use this solution as long as the resulting metabolic activity remains within the optimization problem's constraints (i.e. the solution to the linear system of equations remains a feasible to the linear program). As the solution becomes infeasible, it first becomes a feasible but degenerate solution to the optimization problem, and we can solve a different but related optimization problem to choose an appropriate basis to continue forward simulation. We demonstrate the efficiency and robustness of our method by comparing with currently used methods on a four species community, and show that our method requires at least $91\%$ fewer optimizations to be solved. For reproducibility, we prototyped the method using Python. Source code is available at \verb|https://github.com/jdbrunner/surfin_fba|.

\section*{Author summary.}
The standard methods in the field for dynamic flux balance analysis (FBA) carries a prohibitively high computational cost because it requires solving a linear optimization problem at each time-step. We have developed a novel method for producing solutions to this dynamical system which greatly reduces the number of optimization problems that must be solved. We prove mathematically that we can solve the optimization problem once and simulate the system forward as an ordinary differential equation (ODE) for some time interval, and solutions to this ODE provide solutions to the optimization problem. Eventually, the system reaches an easily check-able condition which implies that another optimization problem must be solved. We compare our method against typically used methods for dynamic FBA to validate that it provides equivalent solutions while requiring fewer linear-program solutions.







\section*{Introduction.}

\subsection*{Microbial communities and human health.}

The makeup of microbial communities is often complex, dynamic, and hard to predict. However, microbial community structure has a profound effect on human health and disease \cite{braundmeier2015,calcinotto2018,flemer2017,Hale2018,ng2013,round2009,walsh2019}.  These two facts have lead to significant interest in mathematical models which can predict relative abundances among microbes in a community.  Various dynamical models have been proposed to explain and predict microbial community population dynamics \cite{fisher2014,gore2017,goyal2018,stein2013ecological,sung2017}. Among these are models which propose that interactions between species are mediated by the metabolites that each species produces and consumes \cite{Niehaus2018,Posfai2017}, and there is significant evidence that these models perform better than models which depend on direct interaction between species \cite{brunner2019metabolite,momeni2017lotka}. 

Recently, advances in genetic sequencing have allowed the creation of genome-scale models (GEMs) that reflect the internal network of cellular metabolism, and can therefore be used to predict metabolite use and production \cite{cobratoolbox2017,lewis2012,lloyd2018}. This technique can be extended to microbial community modeling by combining GEMs of different species. There has been significant interest in using GEMs to predict relative populations of stable microbial communities \cite{chan2017,Diener2018,gottstein2016constraint,Mendes-Soares2016,zomorrodi2014,borer2019modeling,koch2019redcom}. Community metabolic modeling can not only predict relative populations, but also holds the potential to predict and explain the community metabolite yield, which can have a profound effect on health \cite{Hale2018}. Furthermore, model repositories such as the online bacterial bioinformatics resource \emph{PATRIC} \cite{wattam2017improvements} or the \emph{BiGG model database} \cite{king2016bigg} make it possible to build community models using information from individual species investigations. 

GEMs can be used to predict microbial growth rates as well as metabolite consumption and production rates using a process called \emph{flux balance analysis} (FBA). Because these predictions appear in the form of rates of change, they can be used to define a metabolite mediated dynamical model, simply by taking as a vector field the rates of change predicted by FBA. We can therefore combine the techniques of metabolite mediated dynamic modeling and community metabolic modeling to produce dynamic predictions of microbial community population size and metabolite yield. This strategy is called \emph{dynamic FBA} \cite{madadevan2002,varma1994,zhuang2011genome}, and has recently been used to model microbial communities \cite{henson2014dynamic,song2014mathematical,succurro2019emergent}. 

Dynamic FBA, when implemented na\"{i}vely, requires a linear optimization problem to be repeatedly solved, and carries a high computational cost for even small communities. Furthermore, \emph{in silico} experiments may need to be repeated many times over various environmental conditions or using various parameter choices in order to make robust conclusions or to accurately fit model parameters. As a result, implementations of dynamic FBA which depend on optimization at every time-step carry a prohibitively high computational cost when used to simulate larger microbial communities. The implementation of dynamic FBA in the popular COBRA toolbox software package \cite{cobratoolbox2017} is done in this way, and essentially all more efficient available tools for simulating dynamic FBA fundamentally use an ODE solver approach with optimization at each time-step \cite{zhuang2011genome,zhuang2012design,harcombe2014,zomorrodi2014,louca2015calibration,popp2019mubialsim}. Dynamic FBA can be improved by taking advantage of the linear structure of the optimization problem which provides a choice of basis for an optimal solution that may be reused at future time-steps \cite{gomez2014dfbalab,hoffner2012}.  However, the optimizations that are required by this strategy involve solutions with non-unique bases. This means that a basis chosen at random may not provide an optimal solution to the linear program at future time-steps because it provides a solution that is non-optimal or infeasible.

In order to implement dynamic FBA without optimizing at each time step, we use an optimal basic set for the FBA linear optimization problem to create a system of linear equations whose solutions at future time-steps coincide with the solutions to the FBA optimization problem. To solve the problem of non-uniqueness among bases, we prove that there exists a choice of basis that allows forward simulation for a given optimal flux solution and provide a method to choose this basis. Note that this method does not choose among a set of non-unique optimal flux solutions, but instead chooses a basis for a single given optimum. To choose among multiple optimal flux solutions, biological, rather than mathematical, considerations should be used. 

In this manuscript, we detail how dynamic FBA can be simulated forward without re-optimization for some time interval, and give a method for doing so. We propose conditions on an optimal basic set for the FBA linear optimization problem which allows for forward simulation, and we prove that such a choice exists. We then detail how to choose this basis set, and finally give examples of simulations which demonstrate the power of our method.  For reproducibility, we make a prototype implementation of our method in the Python language available at \verb|https://github.com/jdbrunner/surfin_fba|.

\section*{Background}
\subsection*{Flux balance analysis.}

With the advent of genetic sequencing and the resulting genome scale reconstruction of metabolic pathways, methods have been developed to analyze and draw insight from such large scale models \cite{lewis2012}. To enable computation of relevant model outcomes, constraint based reconstruction and analysis (COBRA) is used to model steady state fluxes $v_i$ through a microorganism's internal metabolic reactions under physically relevant constraints \cite{lewis2012}. One of the most basic COBRA methods, called \emph{flux balance analysis} (FBA) optimizes some combination of reaction fluxes $\sum \gamma_i v_i$ which correspond to increased cellular biomass, subject to the constraint that the cell's internal metabolism is at equilibrium:
\begin{equation}\label{fba_null_space}
\Gamma\b{v} = 0
\end{equation}
where $\Gamma$ is the \emph{stoichiometric matrix}, a matrix describing the stoichiometry of the metabolic model.

This optimization is chosen because it reflects the optimization carried out by nature through evolution \cite{lewis2012}. The vector $\b{\gamma} = (\gamma_1,\gamma_2,...,\gamma_d)$ is an encoding of cellular objectives, reflecting the belief that the cell will be optimized to carry out these objectives. The constraint \cref{fba_null_space} means that any optimal set of fluxes found by FBA corresponds to a steady state of the classical model of chemical reaction networks \cite{feinberg1973}. This reflects the assumption that the cell will approach an internal chemical equilibrium.

The optimization is done over a polytope of feasible solutions defined by the inequalities $v_{i,min} \leq v_i \leq v_{i,max}$, or possibly more complicated linear constraints. See \cref{examplefig} for a geometric representation of an example of the type of linear optimization problem that is carried out. By convention, forward and reverse reactions are not separated and so negative flux is allowed. Linear optimization problems like FBA often give rise to an infinite set of optimal flux vectors $\b{v}=(v_1,v_2,...,v_d)$. Geometrically, this set will correspond to some face of the polytope of feasible solutions. To draw conclusions despite this limitation, many methods have been developed to either characterize the set of optimal solutions, as with flux variability analysis (FVA), or enforce more constraints on the network to reduce the size of this set, as with loopless FVA \cite{lewis2012}. 

\subsection*{Dynamic FBA.}

FBA provides a rate of increase of biomass which can be interpreted as a growth rate for a cell. Furthermore, a subset of the reactions of a GEM represent metabolite exchange between the cell and its environment. By interpreting constraints on nutrient exchange reactions within the metabolic network as functions of the available external metabolites and fluxes of exchange reactions as metabolite exchange rates between the cell and its environment, the coupled system can be modeled. The simplest way to do this is to use an Euler method, as in \cite{varma1994}. 

In addition to Euler's method, more sophisticated ODE solvers may be used in the so-called ``direct" method of simply recomputing the FBA optimization at every time-step. This can provide better solution accuracy and potentially larger time-steps, but may also require more than one FBA optimization at each time-step. For instance, the Runge-Kutta fourth order method \cite{bradie} requires four FBA solutions at each time step. Direct methods are implemented in the COBRA toolbox \cite{cobratoolbox2017} and are the central algorithm in many modern tools, including those of Zhuang et al. \cite{zhuang2011genome,zhuang2012design}, Harcombe et al. \cite{harcombe2014}, Zomorrodi et al. \cite{zomorrodi2014}, Louca and Doebeli \cite{louca2015calibration}, and Popp and Centler \cite{popp2019mubialsim}. Notably, any direct method requires at least one complete recalculation of the network fluxes \emph{at each time-step}. 

However, resolving the system at each time step is not necessary, as the solution the optimization problem at some initial time can actually be used to compute future optimal solutions. H\"{o}ffner et al., \cite{hoffner2012}, used this observation to introduce a variable step-size method for dynamic FBA. In that method a basic index set is chosen by adding biological constraints to the optimization problem hierarchically until a unique optimal flux vector is found. The challenge of such an approach is in choosing the basis for the optimal solution, as the optimal basis is not guaranteed to be unique even for a unique optimal flux solution. In fact, due to the nature of the method of H\"{o}ffner et al. and of our method, any optimization past the initial solution that must be carried out is guaranteed to have a solution with a non-unique basis. Furthermore, many choices of optimal basis will not provide a solution for future time-steps, so that choosing among these bases must be done intelligently. Unfortunately, H\"{o}ffner et al. \cite{hoffner2012} do not provide a method for choosing among non-unique bases for a single linear program solution. 

Our method seeks to solve this problem by choosing a basis from among the possibilities provided from an FBA solution which is most likely to remain optimal as simulation proceeds forward. We therefore prioritize reducing the number of times the linear program must be solved, choosing our basis based on the mathematical properties of the system which gives the best chance of providing a solution at future time-steps.

Additionally, a method described as the ``dynamic optimization approach" was introduced in Mahadevan et al., \cite{madadevan2002}, however this method is computationally expensive. In particular, the method given in \cite{madadevan2002} involves optimizing over the entire time-course simulated, and so is formulated as a non-linear program which only needs to be solved once. While this method requires only one optimization, this optimization is itself prohibitively difficult due to the dimensionality of the problem growing with the fineness of time-discretization.

\subsection*{The dynamic FBA model for communities.}

We can write a metabolite mediated model for the population dynamics of a community of organisms $\b{x} = (x_1,...,x_p)$ on a medium composed of nutrients $\b{y} = (y_1,...,y_m)$:
\begin{align}
\dot{x}_i &=g_i(\b{\psi}_i(\b{y}))x_i \label{dynsys1}\\
\dot{y}_j &= -\sum_{i=1}^p\psi_{ij}(\b{y}) x_i\label{dynsys2}
\end{align}
where $\b{\psi}_i$ is a vector of the fluxes of nutrient exchange reactions for organism $x_i$ as determined by FBA. Using FBA to determine $\b{\psi}_i$ is therefore a quasi-steady state assumption on the internal metabolism of the organisms $x_i$\cite{baroukh2014drum,oyaas2018genome,zazueta2018reduction}. 

Recall that the basic assumption of flux balance analysis is that, given a matrix $\Gamma_i$ that gives the stoichiometry of the network of reactions in a cell of organism $x_i$ that growth $g_i(\b{y})$ is the maximum determined by solving the following linear program \cite{lewis2012}:
\begin{equation}\label{FBA_summary}
\left\{
\begin{array}{r}
\max(\b{v}_i\cdot \b{\gamma}_i)\\
\Gamma_i \b{v}_i = 0\\
\b{c}^1_i \leq \b{v} \leq \b{c}^2_i(\b{y})
\end{array}\right\}
\end{equation}
where $\b{c}^1_i$ is some vector of lower flux bounds while $\b{c}^2_i(\b{y})$ is some vector-valued function of the available metabolites which represents upper flux bounds. The key observation allowing dynamic FBA is that the optimal solution to this problem also determines $\b{\psi}_i$ simply by taking $\psi_{ij}$ to be the value of the flux $v_{ij}$ of the appropriate metabolite exchange reaction. For clarity, we will relabel the elements of $\b{v}_i$ so that $\psi_{ik} = v_{ij}$ if $v_{ij}$ is the $k^{th}$ exchange flux, and $\phi_{ik} =v_{ij}$ if $v_{ij}$ is the $k^{th}$ internal flux. The objective vector $\b{\gamma}_i$ indicates which reactions within the cell contribute directly to cellular biomass, and so is non-zero only in elements corresponding to internal fluxes. We can therefore rewrite this vector to include only elements corresponding to internal fluxes, so that the objective of the optimization is to maximize $\b{\gamma}_i \cdot \b{\phi}_i$.

The stoichiometry of metabolite exchange reactions is represented by standard basis vectors \cite{lewis2012}. Therefore, we can partition $\Gamma_i$ as 
\begin{equation}\label{gamma_partition2}
\Gamma_i = \begin{bmatrix}
I&-\Gamma_i^*\\
0 & \Gamma_i^{\dagger} 
\end{bmatrix}
\end{equation}
where $I$ is the identity matrix of appropriate size, and $\Gamma_i^*$ and $\Gamma_i^{\dagger}$ contain the stoichiometry of the internal reactions \cite{lewis2012,kondo2013development,bordbar2014constraint}. Making this change in notation allows us to see that the optimization problem of flux balance analysis is essentially internal to the cell, with external reactions providing constraints. 

We can see from \cref{gamma_partition2} that $\ker(\Gamma_i)$ is isomorphic to $\ker(\Gamma^{\dagger}_i)$, and so we can maximize over this kernel. Then, the exchange reaction fluxes are determined by the internal fluxes according to the linear mapping $\b{\psi}_i = \Gamma^*_i\b{\phi}_i$ .  The maximization of FBA becomes a maximization problem over the internal fluxes\footnote{In fact, we can project onto the kernel of the matrix $\Gamma^{\dagger}_i$ and so reduce the dimensionality of the problem. However, in practice this projection is not numerically stable.}. We rewrite \cref{FBA_summary} using \cref{gamma_partition2} and combine with \cref{dynsys1,dynsys2} to form the differential algebraic system

\begin{align}
&\frac{dx_i }{dt}= x_i (\b{\gamma}_i\cdot \b{\phi}_i) \label{DFBAsim}\\
&\frac{d\b{y}}{dt} = -\sum_{i} x_i \Gamma^*_i \b{\phi}_i\label{y}\\
&\left\{
\begin{array}{r}
\max(\b{\phi}_i\cdot \b{\gamma}_i)\\
\Gamma^{\dagger}_i \b{\phi}_i = 0\\
\b{c}^1_i \leq \begin{bmatrix} 
\Gamma^*_i \\
I
\end{bmatrix}\b{\phi}_i \leq \b{c}^2_i(\b{y})
\end{array}\right\}\label{simpFBA}
\end{align}
where each $\b{\phi}_i$ is determined by the optimization \cref{simpFBA}, all carried out separately. Note that this is a metabolite mediated model of community growth as defined in \cite{brunner2019metabolite}. That is, the coupling of the growth of the separate microbes is due to the shared pool of metabolites $\b{y}$. Each separate optimization which determines $\b{\phi}_i$ at a single time-step depends on $\b{y}$, and each $\b{\phi}_i$ determines some change in $\b{y}$. Furthermore, each optimization is carried out in a manner that depends only the status of the metabolite pool and is independent from the optimizations of other organisms. There is therefore no shared ``community objective". Instead, each organism optimizes according to only its own internal objective.

We write, for full generality, upper and lower dynamic bounds on internal and exchange reactions, and assume that each function $c_{ij}(\b{y}) \in C^{\infty}$. We let 
\begin{equation}\label{matA}
A_i = \begin{bmatrix}
(\Gamma_i^*)^T ,
-(\Gamma_i^*)^T,
I,
-I,
\end{bmatrix}^T
\end{equation}
so that we can rewrite the optimization problem \cref{simpFBA} as
\begin{equation}
\left\{
\begin{array}{r}
\max(\b{\phi}_i \cdot \b{\gamma}_i)\\
A_i \b{\phi}_i\leq \b{c}_i(\b{y},t)\\
\Gamma^{\dagger}_i\b{\phi}_i = \b{0}
\end{array}\right\}\label{DFBALP}
\end{equation}
for ease of notation.

We now hope to select a basic index set $\cI_i$ for \cref{DFBALP} for each organism $x_i$ so that each $\b{\phi}_{i}(t)$ is a solution to the resulting linear system of equations.

\section*{Methods.}

\subsection*{Linear optimization preliminaries.}\label{LPstuff}
In this manuscript, we will rewrite the FBA optimization problem in the form 
\begin{equation}\label{genericLP}
\left\{ \begin{array}{c}
\max(\b{\phi}\cdot \b{\gamma})\\
A\b{\phi} \leq \b{c}\\
\Gamma^{\dagger}\b{\phi} = 0
\end{array}\right\}
\end{equation}
where the matrices $A$ and $\Gamma^{\dagger}$ are derived from the stoichiometric matrix and flux constraints. Such a problem is often referred to as a \emph{linear program} (LP). We now recall some well known results from the study of linear programming (see, for example \cite{bertsimas1997introduction,hoffner2012}). 

First, we note that \cref{genericLP} can be rewritten in the so-called \emph{standard form} with the addition of \emph{slack variables} $\b{s} = (s_1,...,s_n)$ which represent the distance each of the $n$ constraints is from its bound as follows:
\begin{equation}\label{standard}
\left\{ \begin{array}{c}
\max(\b{\tilde{\phi}} \cdot \b{\tilde{\gamma}})\\
\begin{bmatrix}\tilde{A} & I\end{bmatrix}\begin{bmatrix}\b{\tilde{\phi}}\\\b{s}\end{bmatrix} = \b{c}\\
\tilde{\phi}_i \geq 0, s_i \geq 0
\end{array}\right\}.
\end{equation}
Standard form requires that we rewrite $\phi_i =\phi_i^+ - \phi_i^-$ and then define $\b{\tilde{\phi}} = (\phi_1^+,\phi_2^+,...,\phi_d^+,\phi_1^-,\phi_2^-,...,\phi_d^-)$ so that we require non-negativity of each variable, and the matrix $\tilde{A} = \left[A\;B\right]$, $B = -A$. We rewrite the problem in this form to make use of established results, and for ease of notation will write $\b{\phi}$ instead of $\b{\tilde{\phi}}$ when it is clear which form of the problem we are discussing.

We will make use of the well-known result that there exists an \emph{optimal basis} or \emph{basic set} for a bounded linear program \cite{tardella2011fundamental}. To state this result, we first define the notation $B_{\cJ}$ to be the matrix with columns of $[\tilde{A} \; I]$ corresponding to some index set $\{k_1,k_2,...,k_n\}  = \cJ$, and if $B_{\cJ}$ is invertible we define the notation $\b{w}_{\cJ}(\b{a})$ so that 
\begin{equation}\label{basiscalc}
(\b{w}_{\cJ}(\b{a}))_{l} = \left\{\begin{array}{cc}
(B^{-1}_{\cI} \b{a})_j & l = k_j \in \cJ\\
0 & l \not \in  \cJ
\end{array}\right.
\end{equation}
for any $\b{a} \in \bR^n$. We may now define an \emph{optimal basis} and \emph{optimal basic set}.

\begin{definition}
A \emph{basic optimal solution} to a linear program is an optimal solution along with some index set $\{k_1,k_2,...,k_n\}  = \cI$ such that $\b{w} = \b{w}_{\cI}(\b{c})$, where $\b{c}$ is the vector of constraints as in \cref{standard}. The variables $\{\b{w}_i | i\in \cI\}$ are referred to as \emph{basic variables}, and the index set $\cI$ is referred to as the \emph{basic index set}. 
\end{definition}

Finally, if there exists a bounded, optimal solution to \cref{standard}, then there exists a basic optimal solution and corresponding basic index set.

For a given basic optimal solution vector $\b{w}$, there may be more than one basic index set $\cI$ such that $\b{w} = \b{w}_{\cI}(\b{b})$. Such a solution is called \emph{degenerate}. Clearly a necessary condition for such non-uniqueness is that there exists some $k\in \cI$ such that $w_k= 0$. This is also a sufficient condition as long as there is some column of $[\tilde{A} \, I]$ which is not in the column space of $B_{\cI \setminus \{k\}}$.

\subsection*{Forward simulation without re-solving.}

Consider again \cref{DFBALP}, the linear program that must be solved at each time point of the dynamical system for each microbial population. Information from prior solutions can inform future time-steps as long as the region of feasible solutions has not qualitatively changed. Thus, we may only need to solve the optimization problem a few times over the course of a simulation. The key observation making this possible is that the simplex method of solving a linear program provides an optimal basis for the solution. We may often re-use this basis within some time interval, and therefore find optimal solutions without re-solving the linear program. 

In order to do this, we need to find a form of the solution which may be evolved in time. Thus, we turn the system of linear inequalities given in the linear program into a system of linear equations. Then, if this system has a unique solution we have reduced the task to solving a system of equations rather than optimizing over a system of inequalities. We can find such a system of equations by solving the linear program once, and using this solution to create a system of equations whose solution provides the optimal flux $\b{\phi}_i$, as described above. We then use this same system to simulate forward without the need to re-solve the solution to the system of equations until there is no longer a feasible solution to the linear program.

First, the linear program \cref{DFBALP} is transformed into standard form (\cref{standard}). Then, a basic optimal solution is found with corresponding basic index set $\cI_i$.  The dynamical system \cref{DFBAsim,y,DFBALP} can then be evolved in time using \cref{basiscalc}. This evolution is accurate until some $w_{ij}$ becomes negative (meaning that the solution is no longer a feasible solution to the linear program). At this point, a new basis must be chosen. That is, until $\b{w}_{\cI_i}(\b{c}(t))$ becomes infeasible, we let $(\phi_{j_1}(\b{c}_i(t)),...,\phi_{j_m}(\b{c}_i(t)),s_1(\b{c}_i(t)),...,s_n(\b{c}_i(t)))=\b{w}_{\cI_i}(\b{c}_i(t))$ and replace \cref{DFBAsim,y,DFBALP} with 
\begin{align}
\frac{dx_i }{dt}&= x_i (\b{\gamma}_i\cdot \b{\phi}_{i}(\b{c}_i(t))) \label{lindfba1}\\
\frac{d\b{y}}{dt} &= -\sum_{i} x_i \Gamma^*_i \b{\phi}_{i}(\b{c}_i(t))\label{lindfba2}
\end{align}

One major difficulty in this technique is that a unique $\b{w}_i$ does not guarantee a unique basis set $\cI_i$. If we have some $(w_{\cI_i})_j = 0$ for $j\in \cI_i$, then there exists some alternate set $\hat{\cI}_i$ such that $\b{{w}}_{\hat{\cI}_i} = \b{{w}}_{\cI_i}$. Such a solution $\b{{w}}_{\cI_i}$ is called \emph{degenerate}. In a static implementation of a linear program, the choice of basis of a degenerate solution is not important, as one is interested in the optimal vector and optimal value. However, as we will demonstrate with \cref{basismatters}, the choice of basis of a degenerate solution is important in a dynamic problem. In fact, if the system given in \cref{lindfba1,lindfba2} is evolved forward until $\b{w}_{\cI_i}(\b{c}_i(t))$ becomes infeasible, the time at which the system becomes infeasible is the time at which we have some $(w_{\cI_i})_j = 0$ for $j\in \cI_i$. Thus, we need to resolve \cref{DFBALP} whenever $\b{w}_{\cI_i}(\b{c}_i(t))$ becomes degenerate, which will be the final time-point at which the $\b{w}_{\cI_i}(\b{c}_i(t))$ is feasible.

\begin{example}\label{basismatters}
Consider the dynamic linear program 
\begin{equation}
\left\{\begin{array}{c}
\max((1,1) \cdot \b{v})\\
\begin{bmatrix}
1 & 0\\
0& 1\\
1 & 2
\end{bmatrix}\b{v} \leq 
\begin{bmatrix}
10\\
10\\
30 - t
\end{bmatrix}\\
v_i \geq 0
\end{array}
\right\}
\end{equation}
In standard form at $t = 0$, this linear program becomes
\begin{equation}
\left\{\begin{array}{c}
 \max((1,1) \cdot \b{v})\\
\begin{bmatrix}
1 & 0 & 1 & 0 & 0\\
0& 1 & 0 & 1 & 0\\
1 & 2 & 0 & 0 & 1
\end{bmatrix}\begin{bmatrix}\b{v}\\\b{s}\end{bmatrix} =
\begin{bmatrix}
10\\
10\\
30
\end{bmatrix}\\
v_i,s_i \geq 0
\end{array}
\right\}
\end{equation}
which has the unique solution $\b{w} =  (10,10,0,0,0)$. There are three choices of basic index sets: $\cI_1 = \{1,2,3\}$, $\cI_2 = \{1,2,4\}$, and $\cI_3 = \{1,2,5\}$. The resulting bases are
\[\
B_{\cI_1} = \begin{bmatrix}
1&0&1\\
0&1 & 0\\
1& 2& 0
\end{bmatrix}\quad
B_{\cI_2}= \begin{bmatrix}
1&0&0\\
0&1 & 1\\
1& 2& 0
\end{bmatrix}\quad
B_{\cI_3} = \begin{bmatrix}
1&0&0\\
0&1 & 0\\
1& 2& 1
\end{bmatrix}
\]
Computing \cref{basiscalc} at $t>0$ for each, we have that $B_{\cI_1}$ yields $\b{w}_{\cI_1}(\b{c}(t)) = (10-t,10,t,0,0)$, $B_{\cI_2}$ yields $\b{w}_{\cI_2}(\b{c}(t)) = (10,10-\nicefrac{t}{2},0,\nicefrac{t}{2},0)$, and $B_{\cI_3}$ yields $\b{w}_{\cI_3}(\b{c}(t)) = (10,10,0,0,-t)$, shown in \cref{examplefig} for $t>0$. Thus, only $\b{w}_{\cI_2}(\b{c}(t))$ solves the dynamic problem because $\b{w}_{\cI_1}(\b{c}(t))$ is not optimal and $\b{w}_{\cI_3}(\b{c}(t))$ is not feasible for $t>0$. We may follow $\b{w}_{\cI_2}$ and be insured of remaining at an optimal solution to the linear program until $t = 20 + \varepsilon$, at which point $\b{w}_{\cI_2} = (10,-\varepsilon/2,0,10,0)$, which is not a feasible solution to the linear program. At time $t=20$, a re-optimization is required to choose a new basis.

Notice that the correct choice of basis fundamentally depends on the time-varying bound function $\b{c}(t) = (10,10,30-t)$. To see this, consider other possible time-varying bounds $\b{c}(t)$ which have $\b{c}(0) = (10,10,30)$. For example, if $\b{c}(t) = (10-t,10-t,30)$, then only $B_{\cI_3}$ would give the correct $\b{w}(\b{c}(t))$ for $t>0$.
\end{example}

\begin{figure}[h!]
\centering
\includegraphics[scale = 1]{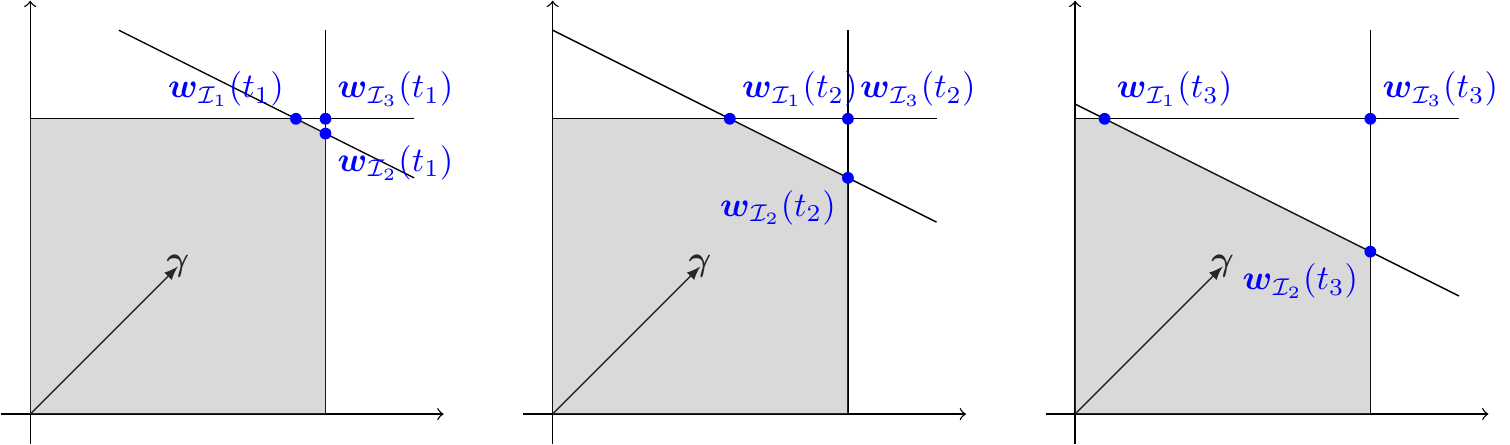}
\caption{Geometric representation of \cref{basismatters} for $t_3>t_2>t_1>0$, showing the three options for bases which are equivalent at $t=0$. Note that the best choice depends on the function $\b{c}(t) = (10,10,30-t)$ and cannot be chosen using the static problem alone. The feasible region of the optimization problem is shown in gray.}\label{examplefig}
\end{figure}

\subsection*{A basis for the flux vector.}

We now provide a method to choose a basis $\cI_i$ for each organism $x_i$ in the case of a degenerate solution. Consider an optimal solution $\b{w}_i$ to the linear program \cref{standard}. To simulate forward according to \cref{lindfba1,lindfba2}, we need for each organism $x_i$ a basic index set $\cI_i$ such that 
\begin{equation}\label{dynstandard}
\left\{ \begin{array}{c}
\b{\dot{w}_i} = \b{w}_{\cI_i}\left(\frac{d}{dt}\b{c}_i\right)\\
\begin{bmatrix}\tilde{A} & I\end{bmatrix}\b{\dot{w}} = \frac{d}{dt}\b{c}_i\\
(\b{w}_{\cI_i})_{j} = 0 \Rightarrow \dot{w}_{ij} \geq 0
\end{array}\right\}
\end{equation}
so that the solution remains feasible, and furthermore that $\b{\dot{w}}_i$ is optimal over the possible choice of basic index sets for $\b{w}_i$. This is obviously a necessary condition for forward simulation within some non-empty time interval, and can be made sufficient (although no longer necessary) by making the inequality $(\b{w}_{\cI_i})_{j} = 0 \Rightarrow \dot{w}_{ij} \geq 0$ strict. We use the relaxed condition for more practical applicability. 

In order to develop a method based on the above observation (i.e., \cref{dynstandard}), we must know that \cref{standard} has such a solution. We therefore require the following lemma, which is proved in \cref{lemma}:
\begin{lemma}\label{existencelemma}
For a linear program with the form given in \cref{standard} with a basic optimal solution $\b{w}$, there exists a basic index set $\cI$ such that \cref{dynstandard} holds and $\b{\dot{w}}$ is optimal over the possible choice of basic index sets for $\b{w}$.
\end{lemma}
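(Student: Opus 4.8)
The plan is to package the whole question as a single auxiliary linear program and read both $\cI$ and $\b{\dot{w}}$ off one of its basic optimal solutions. To set this up, first observe that for \emph{any} basic index set $\cI$ with $\b{w}_{\cI}(\b{c})=\b{w}$ the middle equation of \cref{dynstandard} is free: taking $\b{\dot{w}}=\b{w}_{\cI}(\tfrac{d}{dt}\b{c})$, this vector is supported on $\cI$ and $B_{\cI}$ applied to its restriction is $\tfrac{d}{dt}\b{c}$ by \cref{basiscalc}, so $\begin{bmatrix}\tilde{A} & I\end{bmatrix}\b{\dot{w}}=\tfrac{d}{dt}\b{c}$. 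Hence the lemma reduces to producing a basic index set $\cI$ for $\b{w}$ that is \emph{feasibility preserving} --- $(\b{w}_{\cI})_j=0\Rightarrow(\b{w}_{\cI}(\tfrac{d}{dt}\b{c}))_j\ge 0$ --- and for which $\b{\gamma}\cdot\b{\dot{w}}$ is as large as possible among all such index sets.

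Next I would introduce the auxiliary (``derivative'') linear program in an unknown $\b{u}$, thought of as the candidate $\b{\dot{w}}$: maximize $\b{\gamma}\cdot\b{u}$ subject to $\begin{bmatrix}\tilde{A} & I\end{bmatrix}\b{u}=\tfrac{d}{dt}\b{c}$ together with $u_j\ge 0$ for every index $j$ with $w_j=0$, the remaining coordinates being unconstrained. This program is \emph{bounded}: if $\b{\delta}$ lies in its recession cone then $\b{w}+t\b{\delta}$ is feasible for \cref{standard} for all small $t>0$ (the zero coordinates of $\b{w}$ only move up, the positive ones stay positive), so optimality of $\b{w}$ forces $\b{\gamma}\cdot\b{\delta}\le 0$. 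It is also \emph{feasible}, and this is where the standing hypothesis that forward simulation proceeds on a non-empty time interval enters: that hypothesis gives feasible points $\b{\phi}_n\ge 0$ with $\begin{bmatrix}\tilde{A} & I\end{bmatrix}\b{\phi}_n=\b{c}(t_n)$ for a sequence $t_n\downarrow 0$, and if the auxiliary program were infeasible a Farkas-type certificate $\b{\mu}$ would, tested against these feasible points, give $\b{\mu}\cdot\b{c}(t_n)\le 0=\b{\mu}\cdot\b{c}(0)$ while also requiring $\b{\mu}\cdot\tfrac{d}{dt}\b{c}>0$, contradicting differentiability of $\b{c}$ as $t_n\downarrow 0$.

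With feasibility and boundedness in hand the auxiliary program has a basic optimal solution. Using that the columns of $\begin{bmatrix}\tilde{A} & I\end{bmatrix}$ indexed by $\{j:w_j>0\}$ are linearly independent --- automatic, since they already lie in the given basis of $\b{w}$ --- one can choose this basic optimal solution, call it $\b{\dot{w}}$, to have a basis $\cI$ that contains all of $\{j:w_j>0\}$ (a standard purification/pivoting argument along zero-objective directions of the optimal face). Then $\b{w}$ is supported on $\cI$, so $\b{w}=\b{w}_{\cI}(\b{c})$, i.e.\ $\cI$ really is a basic index set for $\b{w}$, and $\b{\dot{w}}=\b{w}_{\cI}(\tfrac{d}{dt}\b{c})$. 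The condition $(\b{w}_{\cI})_j=0\Rightarrow\dot{w}_j\ge 0$ holds because such a $j$ either lies outside $\cI$, where $\dot{w}_j=0$, or lies in $\cI$ with $w_j=0$, where $u_j$ was one of the constrained variables of the auxiliary program, hence $\dot{w}_j\ge 0$. Finally, every feasibility-preserving basic index set $\cI'$ for $\b{w}$ yields a vector $\b{w}_{\cI'}(\tfrac{d}{dt}\b{c})$ that is feasible for the auxiliary program, so $\b{\gamma}\cdot\b{\dot{w}}\ge\b{\gamma}\cdot\b{w}_{\cI'}(\tfrac{d}{dt}\b{c})$; thus $\b{\dot{w}}$ is optimal over the possible choices, completing the argument. (As a byproduct, $\b{\gamma}\cdot\b{\dot{w}}$ is the right-hand derivative at the current time of the FBA optimal-value function along $\b{c}(t)$, which is the sense in which the chosen basis tracks the optimum.)

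I expect the one genuinely delicate point to be the feasibility of the auxiliary program --- equivalently, that the primal feasible polytope does not collapse in the direction $\tfrac{d}{dt}\b{c}$ at the (possibly degenerate) vertex $\b{w}$. Everything else is routine: the equality constraint of \cref{dynstandard} is automatic, boundedness is immediate from optimality of $\b{w}$, and extracting from a basic optimal solution a basis that contains $\{j:w_j>0\}$ is standard. It is precisely the interplay of degeneracy with the moving right-hand side that forces us to invoke the assumption that forward simulation is possible on some non-empty interval, and to convert that assumption into feasibility via Farkas.
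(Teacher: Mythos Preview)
Your approach is correct and shares the paper's high-level strategy---solve an auxiliary ``derivative'' linear program and read the desired basic index set from one of its basic optimal solutions---but the execution differs in several respects. The paper's auxiliary program (\cref{reduceddyn} in the appendix) restricts to the active rows, imposes nonnegativity on \emph{all} variables, and then proves via a pivoting argument that a basic optimal solution with exactly $r$ non-basic slacks can be arranged; your program lives in the full variable space with sign constraints only on indices where $w_j=0$, and the inclusion $\{j:w_j>0\}\subseteq\cI$ is automatic because free variables must be basic in every basic solution (so your ``purification/pivoting'' remark is in fact unneeded). The paper handles optimality of $\b{\dot{w}}$ by appealing to finiteness of the candidate bases and then concentrates solely on existence; you obtain existence and optimality simultaneously, since every feasibility-preserving basic index set for $\b{w}$ yields a feasible point of your auxiliary program. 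The most substantive difference is that you make explicit, and prove via Farkas, that feasibility of the auxiliary program requires the original LP to remain feasible for nearby $t$; the paper's appendix never addresses feasibility or boundedness of its reduced program, yet this hypothesis is genuinely required---without it the lemma can fail (e.g.\ $\max x$ subject to $x+s=-t$, $x,s\ge0$, at $t=0$, where neither of the two bases for $\b{w}=(0,0)$ satisfies \cref{dynstandard}). Your argument is therefore more complete on this point, at the price of importing an assumption the lemma statement omits but the ambient dynamic-FBA setting supplies.
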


If \cref{standard} has only a non-degenerate solution, the unique basis will satisfy this requirement. The challenge remains to choose from among the possible bases of a degenerate solution.

To do this, we form a second linear program analogous to \cref{dynstandard} in the following way. We first find all constraints $\b{a}_j$ (i.e. rows of $A_i$ or $\Gamma^{\dagger}_i$) such that $\b{a}_{ij}\cdot \b{\phi}_i = c_{ij}(t)$, calling this set $\cS_i$. Note that this set contains all the rows of $\Gamma^{\dagger}_i$, for which we regard $c_{ij}(t) = 0$ for all $t>0$. Note that if the solution given is a basic optimal solution, the rank of the matrix whose rows are $\b{a}_{ij}$ for $\b{a}_{ij} \in \cS_i$ is $d$, where again $d$ is the number of internal fluxes. This is true because we include constraints of the type $a < \phi_{ij} < b$ as rows of $A_i$.

Then, we solve the linear program
\begin{equation}\label{dynIncode}
\left\{\begin{array}{c}
\max(\b{\dot{w}}_i \cdot \b{\gamma}_i)\\
\b{a}_j \cdot \b{\dot{\phi}}_i  \leq \frac{dc_{ij}}{dt}, \;\; \b{a}_j \in \cS_i
\end{array}\right\}
\end{equation}

We may then use any basis $B_{\cI}^i$ which solves \cref{dynIncode} as long as it has exactly $d$ non-basic slack variables. \Cref{existencelemma} tells us that such a choice exists, although it may be necessary to manually pivot non-slack variables into the basis set given by the numerical solver\footnote{In testing the algorithm, this was necessary when using IBM ILOG CPLEX Optimization Studio to solve, but not when using The Gurobi Optimizer.}. Note that we do not need the entire basis $B_{\cI}^i$, but instead only need the $d\times d$ submatrix formed by rows of $A_i$ or $\Gamma_i^{\dagger}$ which correspond to non-basic slack variables in the solution to \cref{dynIncode}. These appear as rows $(\b{a}_i,\b{0})$ in $B_{\cI}^i$, and so this sub-matrix uniquely determines $\b{\phi}_i$.  We call this smaller matrix $B_i$, and label the set of row indices as $\cJ$.

The chosen basis $\cJ$ and corresponding constraints is used to simulate forward until that particular solution becomes infeasible. At that time, we have an optimal solution to \cref{DFBALP} simply by continuity. We therefore do not need to resolve \cref{DFBALP} but instead re-form and solve \cref{dynIncode}.

\subsection*{Pseudo-Code of the method.}

Below, we present as pseudo-code an outline of the method. A practical implication may need to adaptively adjust the time-step $\Delta t$ to insure that no resource is artificially over-depleted past $0$.

\begin{algorithm}
\caption{Dynamic FBA algorithm following \cref{existencelemma}. Note that for numerical stability and speed, we may store the matrices $Q_i,R_i$ such that $Q_iR_i = B_i$ is the QR-factorization of $B_i$ rather than either storing $B^{-1}_i$ or solving completely during each time step of numerical integration.}\label{brunner}
\SetAlgoLined
	\KwIn{Final time $T$, initial microbial biomasses $x_i(0)$, initial nutrient concentrations $y_j(0)$, maximum inflow rates of nutrients $\alpha_i$, stoichiometric matrices $\Gamma_i$}
	\KwOut{Timecourse simulation of biomass and nutrient concentrations}
	\For{ each microbial population $i$}{
	Set $\b{w}_i(0)$ to be solution to eq. (13) which lies on a vertex of the feasible polytope.\;
	Solve eq. (21) to find initial basis $B_i$
	}
	
	\While{$t< T$}{
	Integrate eqs. (14) and (15) from $t$ to $t + \Delta t$ with $\b{\phi}_i = B_i^{-1}\b{c}_{\cJ}(\b{y}(t),t)$\;
	\If{$B_i^{-1}\b{c}_{\cJ}(\b{y}(t+\Delta t),t + \Delta t)$ is not a feasible solution}{
		reset $x_i = x_i(t)$, $y_j = y_j(t)$\;
		Solve eq. (21) to find new basis $B_i$, with additional constraints representing bounds violated by $B_i^{-1}\b{c}_{\cJ}(\b{y}(t),t)$.
		}	
	}
\end{algorithm}

\section*{Results.}

\subsection*{Number of optimizations.}

We can compare the efficiency of \cref{brunner} with modern dynamic FBA methods by counting the number of times a large linear program must be carried out over the course of a simulation. At their core, state-of-art dynamic FBA tools such as \emph{d-OptCom} \cite{zomorrodi2014} and \emph{COMETS} \cite{harcombe2014} employ the direct method of calling an ODE-solving method with the linear program set as the right-hand-side. In the case of Euler's method, the resulting ODE can be integrated by hand between time-steps. This last strategy is often referred to as the ``static optimization approach" \cite{hoffner2012}.

We compared simulation of various combinations of the organisms \emph{Escherichia coli str. K-12 substr. MG1655} (model iJR904), \emph{Saccharomyces cerevisiae S288C} (model iND705), \emph{Pseudomonas putida KT2440} (model iJN746) and \emph{Mycobacterium tuberculosis H37Rv} (model iEK1008), using models from the BiGG database \cite{king2016bigg} (see \nameref{models} for details). We counted the optimizations required for our model, as well as for direct methods using the numerical ODE solvers \emph{vode}, \emph{zvode}, \emph{lsoda}, \emph{dopri5}, and \emph{dop853} from the SciPy library. All of these numerical ODE solvers use adaptive step sizes for accuracy and stability, and so represent optimized choices of time-steps. Additionally, we compared the method of H\"{o}ffner et al. as implemented in the MatLab package \emph{DFBAlab} \cite{gomez2014dfbalab}. 

For our method and the direct method, we allowed exchange of every metabolite detailed in \nameref{media} with initial metabolite concentrations given by that same file, and with initial biomass of $0.3$ for each species. The file \verb|sim_comm.py| in the supplementary repository \nameref{software} contains complete simulation set-up. 

To compare with the method of H\"{o}ffner et al. \cite{hoffner2012}, we use the newly available Python package from the research group of Dr. David Tourigny titled \emph{dynamic-fba} \cite{dfbaNew} for single organisms. This package allows simulation without secondary optimizations, as our does,  and so is more similar to our  prototype tool for comparison. Unfortunately, this package is currently only able to simulate single organisms at the time of publishing. For microbial communities, we can compare with the MatLab package DFBAlab \cite{gomez2014dfbalab} which requires all dynamics variables to be optimized in a secondary optimization. For simulations with DFBAlab, we use only the low-concentration metabolites D-glucose, oxygen, and cob(I)alamin from the M9 medium detailed in \nameref{media} as dynamically varying metabolites. It is worth noting that these are the most favorable conditions we could find for the method of H"{o}ffner \cite{hoffner2012,gomez2014dfbalab} et al. which are  still biologically equivalent to our other simulations.

\begin{table}
\begin{adjustwidth}{0in}{0in}
    \centering
    \begin{tabular}{l|c c c c c c c}
    \toprule
 {} &  \multicolumn{7}{c}{\bf Solution Method}\\
{\bf Model Combination} & \Cref{brunner} & H\"offner & vode & zvode & lsoda & dopri5 & dop853 \\
\midrule
iJR904 & 7 & 1 &62 &62 &116 &3313 &6228 \\
iND750 & 4 & 1 & 91& 91& 85& 3508& 6514\\
iJN746 & 4 & 13 & 166 &167 & 376& 1176& 2249\\
iEK1008 & 4 & 4 & 120 &120 &208 &2768 & 5148 \\
iJR904 + iND750 & 4 & 24 &240&211 & 346& 5586&10469  \\
iJR904 + iJN746 & 30 & 479 &420 &420 &744 &2695 & 5579\\
iJR904 + iEK1008 & 20 & 136 & 216 &216 &454 &3385 &6411 \\
iND750 + iEK1008 & 8 & 32 & 311 & 311&509 &5284 &9888 \\
iJR904 + iND750 + iEK1008 & 18 & 32* &451 &451 &1282 &6225 &11961 \\
iJR904 + iND750 + iJN746 + iEK1008 & 56 & 672 &1122 &1122 &2242 &6837 & 13529\\
\bottomrule
    \end{tabular}
    \caption{Number of realizations required to simulate to time $t=5$ with no cell death or metabolite flow, using M9 minimal medium. *Simulation failed at $t=3.034277$.}
    \label{tab:numR}
    \end{adjustwidth}
\end{table}

\begin{figure}[h!]
\centering
\includegraphics[scale = 0.35]{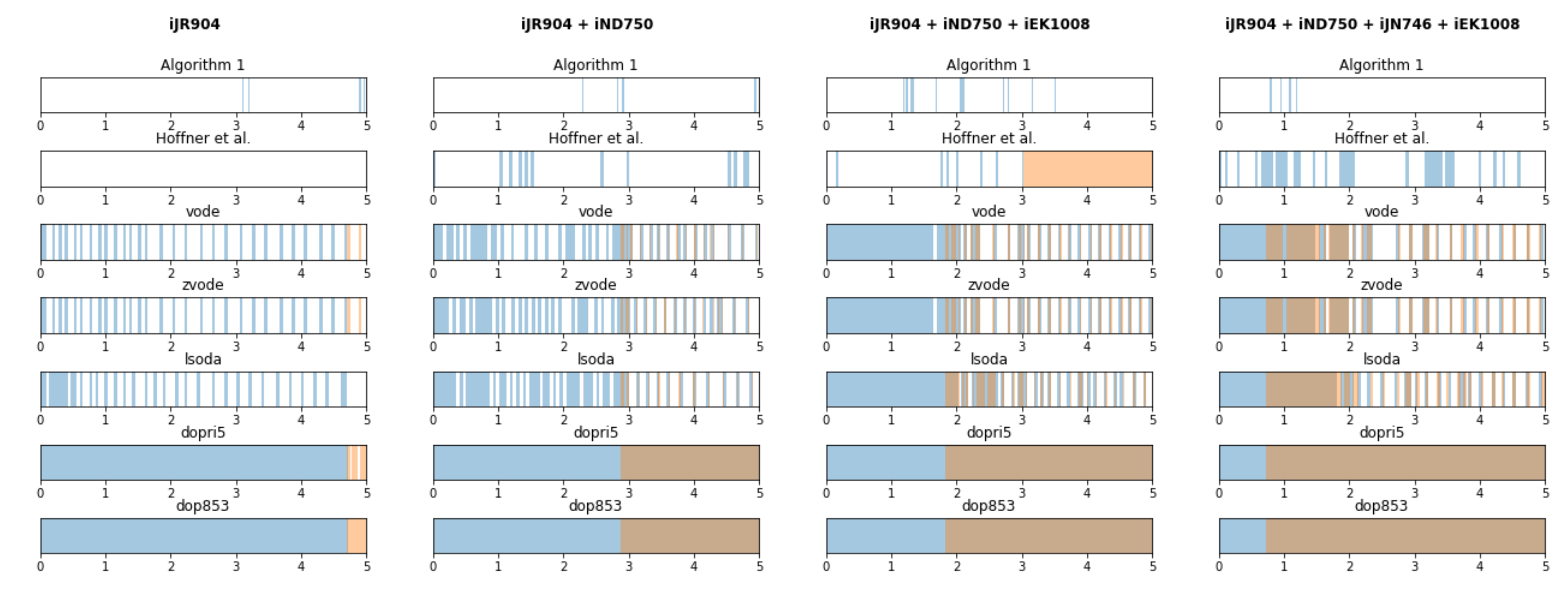}
\caption{Time-points of re-optimizations required in simulations using the proposed method, the method of H\"{o}ffner et al. \cite{hoffner2012} and various direct methods, shown in blue. Shown in orange are times at which the direct method solver encountered an infeasible linear program due to numerical error.}\label{reopts}
\end{figure}

\subsection*{Error estimation.}\label{locerror}

Our method provides much less theoretical error in dynamic FBA solutions than traditional methods. In fact, \Cref{brunner} implies that a simulation of a microbial community can be divided into time intervals on which the algorithm is exact. Of course, this assumes that the linear ODE solved in these intervals is solved exactly rather than numerically.

Precisely, there exits some sequence $t_0 = 0 < t_1< \cdots < t_{n-1} < t_n = T$ such that if we know the optimal flux vectors $\b{w}_i(t_l)$ at time $t_l$, then \cref{existencelemma} implies the existence of a set of invertible matrices $B_i^l$ such that solutions to \cref{lindfba1,lindfba2} are solutions to \cref{DFBAsim,y,DFBALP} for $t\in [t_l,t_{l+1}]$. Therefore, if we are able to identify the $t_l$ exactly, then \cref{brunner} provides exact solutions to the dynamic FBA problem \cref{DFBAsim,y,DFBALP}. Of course, numerical limitations imply that we will not re-optimize precisely at each $t_l$, and so we must investigate the impact of this error. However, once re-optimization is done, the method is again exact. The result is that we have no local truncation error for any time step taken between re-optimization after $t_l$ and the interval endpoint $t_{l+1}$, except for error due to numerical integration. In comparison, direct methods from some integration error at every time step. This error depends on the integration strategy used, and so for example the Euler's method based static optimization approach carries first order local truncation error at each time step. This can easily lead to ODE overshoot and infeasible linear programs at future time-step.

Assume that $t_{l-1}$ is known exactly, and $N$ is such that $t^1 = t_{l-1} + (N-1)\Delta t \leq t_l < t_{l-1} + N\Delta t = t^2$, so that there is some possible error in the interval $[t^1,t^2]$. We can estimate the accumulated error in this time interval using a power series expansion. Let $\b{x}(t),\b{y}(t)$ be solutions to \cref{DFBAsim,y,DFBALP} and $\b{\tilde{x}},\b{\tilde{y}}$ be solutions given by \cref{brunner} for $t \in [t^1,t^2)$. Furthermore, let $B_i^{l-1}$ be the invertible matrices derived by solving \cref{DFBALP} at $t_{l-1}$ and $B_i^{l}$ those derived by solving at $t_{l}$. Then, $\b{x}(t^1) = \b{\tilde{x}}(t^1)$ and $\b{y}(t^1) = \b{\tilde{y}}(t^1)$. For each $x_i$ we expand, assuming some regularity of the functions $\b{c}(\b{y})$,
\begin{equation}
x_i(t^2)-\tilde{x}_i(t^2) = (\Delta t)x_i(t_1)(\b{\gamma}_i\cdot \left((B_i^{l-1})^{-1} - (B_i^{l-1})^{-1}\right)\b{\hat{c}}_i (\b{y}(t^1)) + o(\Delta t)
\end{equation}
and see that this method gives first order local error in time steps that require a re-optimization.

The local error, while first order, only appears at time steps in which a re-optimization occurred, and so global error will scale with the number of necessary re-optimizations. This is in contrast with the classical use of Euler's method, which gives first order local error at every time-step, or any other direct ODE method, whose error is dependent on the solver used.

We may compare the solutions provided by direct methods with those provided by the method presented in \cref{brunner} and by the method of H\"{o}ffner et al. \cite{hoffner2012}. The root-sum-square ($l_2$) difference in results are shown in \cref{tab:diffwithdirect}. As we argue above, direct methods are less accurate in theory that the algorithm presented in \cref{brunner}. Furthermore, direct simulations routinely failed to simulate to time $t=5$ without encountering an infeasible linear program. This infeasibility is the result of numerical error accumulating throughout the simulation. The comparisons in \cref{tab:diffwithdirect} can be summarized by three distinct characteristics. First, in the case of \emph{S.cerevisiae}, the direct methods agree well with the newly presented method. Secondly, in the case of \emph{E.coli} and \emph{M.tuberculosis}, error seems to begin accumulating immediately. Finally, in the case of \emph{P.putida}, the simulations agree well up to some time-point at which the direct method fails and either quits entirely (as in the case of the \emph{dopri5} solver which returns small error) or continues at a constant value.

We note that discrepancies in dynamic FBA simulation may not always be due to numerical error, but instead due to non-uniqueness in optimal flux solutions. Our method provides a strategy for choosing between non-unique representations (in the form of a basis) of a single optimal flux solution. The method of H\"{o}ffner et al. \cite{hoffner2012} provides a lexicographic strategy for choosing between non-unique optimal flux solutions based on biological, rather than mathematical, considerations. We note that for complete reproducibility, our method should be integrated with some biologically based strategy for choosing between non-unique optima.

\begin{table}
\begin{adjustwidth}{0in}{0in}
    \centering
\begin{tabular}{l|llllll}
\toprule
{} &     vode &    zvode &   lsoda &   dopri5 &   dop853 & Hoffner et al. \\
\midrule
E.coli         &  5.09933 &  5.09933 & 4.61467 &  5.09928 &  5.09928 & 4.68578 \\
M.tuberculosis &  1.45401 &  1.45401 & 1.45417 &  1.45415 &  1.45415 & 2.48691 \\
S.cerevisiae   &  0.00426 &  0.00426 & 0.00430 &  0.00429 &  0.00429 & 3.06105 \\
P.putida       & 15.29177 & 15.29177 & 0.07080 & 15.23826 & 15.26221 & 4.78751 \\
\bottomrule
\end{tabular}
    \caption{$l_2$ difference in solutions to single-organism simulations between direct methods and the method presented in \cref{brunner}.}
    \label{tab:diffwithdirect}
    \end{adjustwidth}
\end{table}

\begin{figure}
    \centering
  \includegraphics[scale = 0.18]{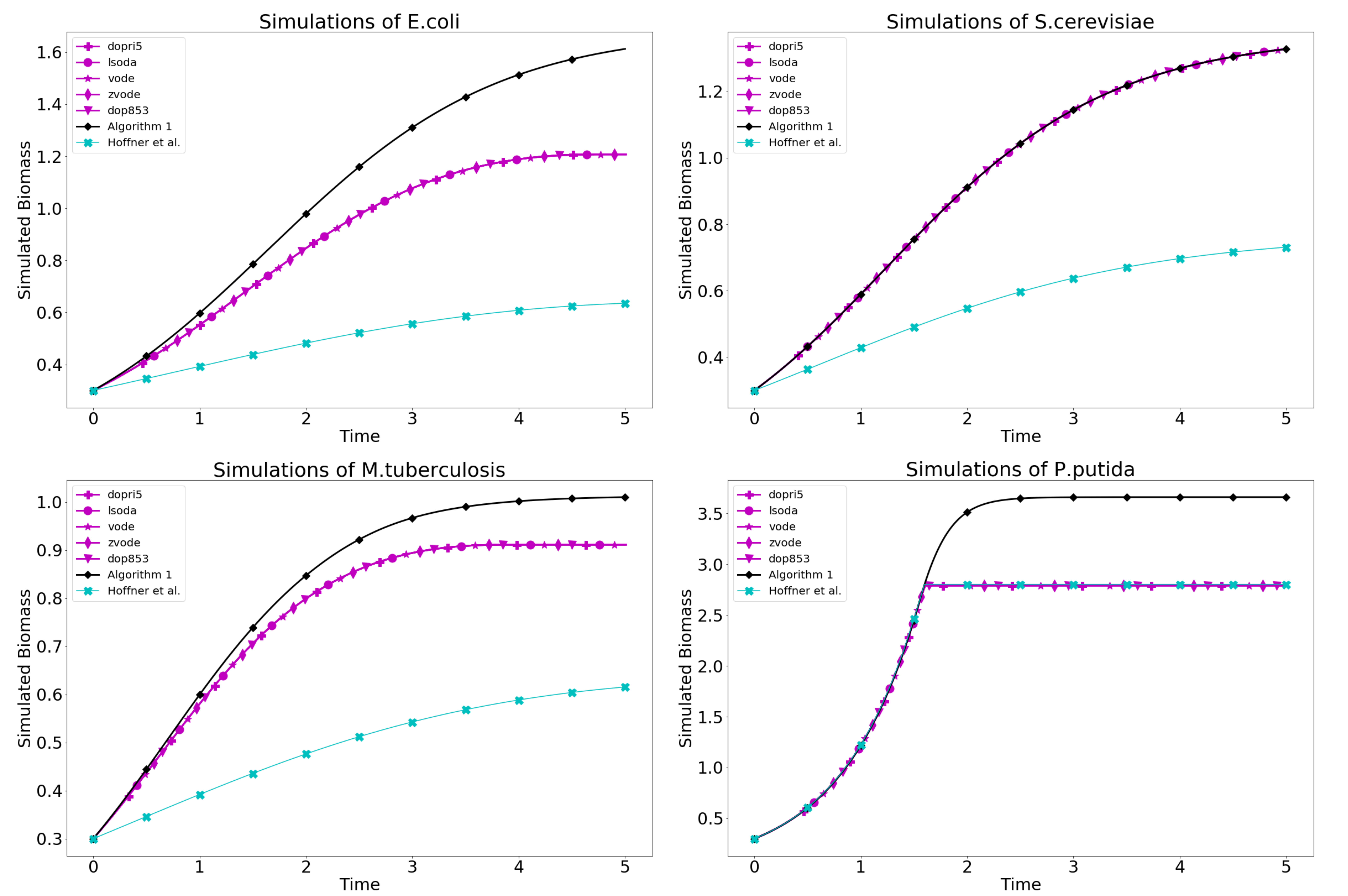}
    \caption{Simulations of \emph{E.coli}, \emph{S.cerevisae}, \emph{M.tuberculosis} and \emph{P.putida} using \cref{brunner}, direct solvers, and the method of H\"{o}ffner et al. In simulations of \emph{E.coli} \emph{M.tuberculosis}, there is discrepancy early in the simulation. In contrast, simulations of \emph{P.putida} agree up to the point that an ODE solver fails.}
    \label{fig:diffwithdirect}
\end{figure}

\section*{Examples \& applications.}

There has been a recent surge in interest in modeling microbial communities using genome-scale metabolic models, much of which has focused on equilibrium methods \cite{gottstein2016constraint,Diener2018,Hale2018,islam2019metabolic,koch2019redcom}. In order to capture transient behavior and dynamic responses to stimuli, dynamic FBA has also been applied to microbial communities \cite{zomorrodi2014,xu2019modeling,succurro2019emergent}. However, community dynamic FBA invariable leads to a large dynamical system with a high-dimensional parameter space, often with little to know knowledge of parameter values. Any parameter fitting therefore requires repeated numerical simulation of the system. Existing tools to do this are built around a direct simulation approach, requiring many linear program solutions. By drastically reducing the number of optimizations required for numerical simulation, our approach offers the promise of efficient numerical simulation of dynamic FBA which will make parameter fitting more tractable, and may even allow conclusions without well-fit parameters.

Below, we demonstrate that the problem of parameter fitting is an important one by show that experimental outcome in even small communities is sensitive to changes in kinetic parameters. Precisely, the kinetic parameters governing the uptake rate of nutrients (i.e., the parameters of the functions $\b{c}^2_i$ in \cref{FBA_summary}) have a profound effect on species competition. 

Next, we show how repeated simulation with randomly sampled parameters can provide some insight into community structure even without a well-fit set of nutrient uptake parameters. These examples demonstrate the importance of efficient dynamic FBA to microbial community modeling.

\subsection*{Prediction dependence on nutrient uptake.}

The set of unknown functions $\b{c}_i^2(\b{y})$ in \cref{FBA_summary} present a profound problem for dynamic FBA simulation. If the behavior of the system is sensitive to the functions chosen and parameters of those functions, a single simulation will be of little use in drawing biological conclusion. In order to demonstrate that such a sensitivity exists, we repeatedly simulated the same simple community with different randomly drawn parameters. While a more realistic choice of function may be saturating or sigmoidal (as with Hill or Michaelis-Menten kinetics), for the following experiment we take these functions to be linear:
\begin{equation}
c_{ij}^2(\b{y}) = \kappa_{ij} y_j,
\end{equation}
meaning that the maximum uptake rate of nutrient $y_j$ by organism $x_i$ is proportional to the concentration of $y_j$. This choice minimizes the number of parameters that must be chosen for our analysis of parameter sensitivity, and is in line with an assumption of simple mass action kinetics\cite{horn1972,feinberg1979}.

The choice of $\kappa_{ij}$ may have a profound effect on the outcome of a community simulation, as it represents how well an organism can sequester a resource when this will optimize the organism's growth. In order study this effect in a small community, we sampled a three-species community model with $\kappa_{ij} \in (0,1)$ chosen uniformly at random. We used models for \emph{E.coli}, \emph{S.cerevisiae} and \emph{M.tuberculosis} downloaded from the BiGG model database\cite{king2016bigg}. 

We simulated with no dilution of metabolites or microbes, and no replenishment of nutrients. In every simulation, some critical metabolite was eventually depleted and the organisms stopped growing. We recorded the simulated final biomass of each organism from each simulation, and the results are shown in \cref{kappafig}.

\begin{figure}
\centering
\includegraphics[scale = 0.35]{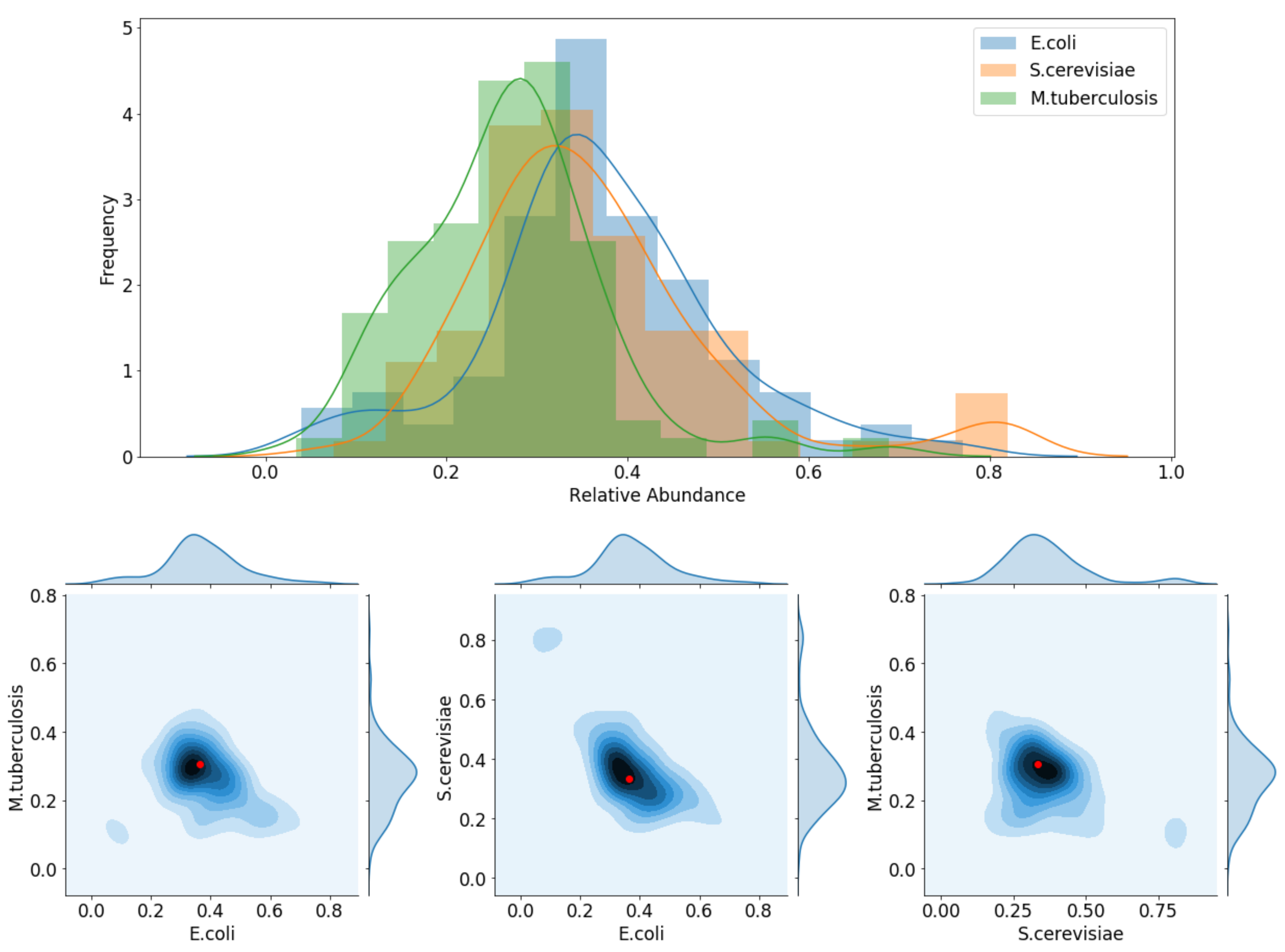}
\caption{(Top) Histogram of the final simulated biomass of each of \emph{E.coli}, \emph{S.cerevisiae} and \emph{M.tuberculosis} from 95 simulations, each with different metabolite uptake rates $\kappa_{ij}$. (Bottom) Pair-wise comparison of the final simulated biomass densities using a kernel density estimation. In red is the result of uniform uptake rates $\kappa_{ij} = 1$ for all $i,j$.}\label{kappafig}
\end{figure}

\subsection*{Community growth effects.}
As we saw in previous section, community growth outcomes depend on the choice of nutrient uptake rates $\kappa_{ij}$. Using \cref{brunner}, we can perform Monte-Carlo sampling in order to understand the possible effects on some microorganism growing in some community. To do this, we randomly sample the set of uptake rates $\kappa_{ij}$ and run simulations of various communities for the chosen uptake rates. Then, the correlation between communities of final simulated biomass of some organism can be interpreted as the effect of the community on the growth of that organism. A correlation less than $1$ between growth of an organism in different communities indicates that the community is having some effect. To see the direction of this effect, we can fit a simple linear regression model (best fit line) to the final simulated biomasses. Then, the slope of this line tells us if the organism benefits or is harmed by being in one community over another. 

We again simulated \emph{E.coli}, \emph{S.cerevisiae} and \emph{M.tuberculosis} downloaded from the BiGG model database \cite{king2016bigg}. Simulations were run with the M9 medium described in \nameref{media}, with no replenishment of resources.

Each organism grew to a larger final simulated biomass when alone compared to when in a trio with the other two, which is unsurprising given the finite resources. This difference was the least pronounced for \emph{S.cerevisiae}, suggesting that this organism is the least negatively effected by the competition. However, this can be seen as only a preliminary observation without better estimates of uptake parameters. Best-fit lines are shown in \cref{pairtriosamples}. Efficient dynamic FBA allows repeated simulation with randomly sampled parameters, which gives an indication of likely behavior even without accurate parameter fitting.

\begin{figure}[ht]
\centering
\includegraphics[scale = 0.4]{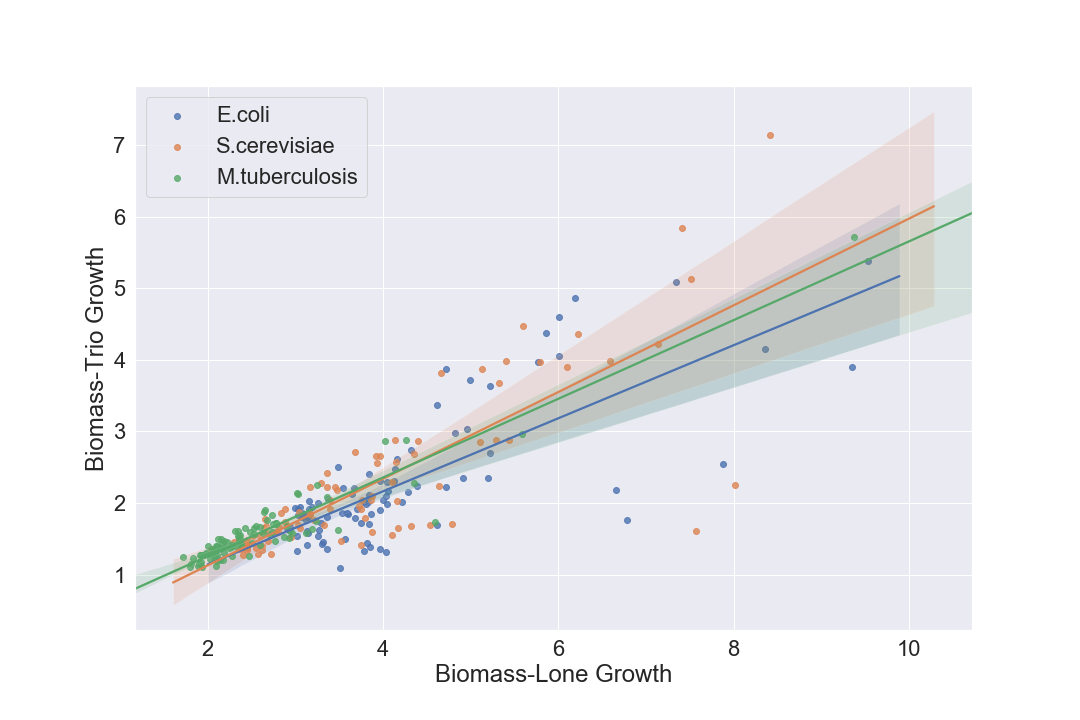}
\caption{Final simulated biomass of \emph{E.coli}, \emph{S.cerevisiae} and \emph{M.tuberculosis} when grown alone or in pairs, for randomly sampled modeled parameters. Best fit lines indicate the average effect of the community on an organism's growth.}\label{pairtriosamples}
\end{figure}

\section*{Conclusion}

Understanding, predicting, and manipulating the make-up of microbial communities requires understanding a complex dynamic process. Genome-scale metabolic models provide an approximation to this process through the quasi-steady state assumption which leads to dynamic flux balance analysis. However, this system is large and hard to simulate numerically, let alone analyze for qualitative behaviors. As a first step towards a thorough analysis of community of organisms modeled with dynamic FBA, an efficient method of numerical simulation would provide an essential tool. However, modern tools for simulating dynamic FBA rely on repeatedly solving an optimization problem at every time step \cite{zhuang2011genome,zhuang2012design,harcombe2014,zomorrodi2014,louca2015calibration,popp2019mubialsim}. 

Dynamic FBA simulation can be improved by considering the structure of these linear programs so that many fewer optimizations are required. As of now, the algorithm of H\"{o}ffner et al. \cite{hoffner2012} is the only published method which takes advantage of this observation. However, that method does not account for the degeneracy of solutions to the relevant linear programs, meaning that it can choose a solution that cannot be carried forward in time. We present a method that chooses a basis to for forward simulation. In contrast to the method of H\"{o}ffner et al., we choose this basis in such a way that increases the likelihood that this forward simulation is actually possible. 

Efficient dynamic FBA will allow better parameter fitting to time-longitudinal data. Furthermore, it allows for a search of parameter space which can help predict likely model outcomes or learn maps from parameter values to model outcomes. 

\FloatBarrier 
 
\section*{Supporting information.}
\paragraph*{S1 File.}\label{media}
{\bf M9 medium File.} \verb|m9med.csv| defines an M9 minimal medium as adapted from Monk et al. \cite{monk2013genome}.

\paragraph*{S2 File.}\label{models}
{\bf List of Models Used.} \verb|modelsUsed.csv| provides name, ID, and URL for the four models used in analysis of the method.

\paragraph*{S3 Software.}
\label{software}
{\bf \verb|https://github.com/jdbrunner/surfin_fba|.} Available code for the algorithm described in the Python language. This code requires the popular COBRAPy package for metabolic models.

\section*{Acknowledgments}
This work was supported by funding from the DeWitt and Curtiss Family Foundation, National Cancer Institute grant R01 CA179243, and the Center for Individualized Medicine, Mayo Clinic.

\nolinenumbers

%
%
%


\begin{appendix}
\section{Existence of desired optimal basis.}\label{lemma}
\setcounter{lemma}{0}
\begin{lemma}
For a linear program with the form given in \cref{standard} with a basic optimal solution $\b{w}$, there exists a basic index set $\cI$ such that \cref{dynstandard} holds and $\b{\dot{w}}$ is optimal over the possible choice of basic index sets for $\b{w}$.
\end{lemma}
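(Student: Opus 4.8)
The plan is to deduce \cref{existencelemma} from the solvability of an auxiliary linear program, namely the standard-form version of the program \cref{dynIncode}, whose basic optimal solutions will turn out to be exactly the index sets we want. Write $M=[\tilde A\ I]$, let $Z=\{j:w_j=0\}$ be the zero set of the given basic optimal solution $\b{w}$, and let $\cB$ denote the (finite) collection of basic index sets $\cI$ with $\b{w}=\b{w}_{\cI}(\b{c})$. Consider the directional linear program
\[
Q:\qquad \max\big\{\,\b{\gamma}\cdot\b{p}\ :\ M\b{p}=\tfrac{d}{dt}\b{c},\ \ p_j\ge 0\ \text{for all}\ j\in Z\,\big\},
\]
in which the free (sign-unconstrained) variables are exactly those in $\mathrm{supp}(\b{w})$. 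Two observations about $Q$: first, whenever its feasible set is nonempty, that set has recession cone $K=\{\b{p}:M\b{p}=0,\ p_j\ge 0\ \forall j\in Z\}$, which is precisely the cone of feasible directions of \cref{standard} at $\b{w}$; since $\b{w}$ is optimal, $\b{\gamma}\cdot\b{p}\le 0$ for every $\b{p}\in K$, so $Q$ is bounded above. Second — and this is the delicate point — $Q$ is feasible: a feasible point of $Q$ is nothing but a first-order feasible perturbation of $\b{w}$ consistent with the prescribed rate of change $\tfrac{d}{dt}\b{c}$ of the bounds, and such a perturbation exists because the feasible polytope of \cref{standard} remains nonempty for nearby bound functions, which is exactly the situation in which one is trying to impose \cref{dynstandard} (forward simulation over a nonempty time interval). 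I expect this feasibility claim, in the degenerate case, to be the main obstacle of the proof.

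Granting that $Q$ is feasible and bounded, it has a basic optimal solution $\b{\dot{w}}^{*}$ with basic index set $\cI$, and after performing degenerate pivots to bring every free variable of $Q$ into the basis — exactly the pivoting of non-slack variables noted in the footnote to \cref{dynIncode} — we may assume $\cI\supseteq\mathrm{supp}(\b{w})$. Then $\b{w}$ vanishes off $\cI$, so $B_{\cI}\b{w}_{\cI}=M\b{w}=\b{c}$; hence $B_{\cI}$ is invertible and $\b{w}=\b{w}_{\cI}(\b{c})$, i.e. $\cI\in\cB$. Setting $\b{\dot{w}}:=\b{w}_{\cI}(\tfrac{d}{dt}\b{c})$ we get $\b{\dot{w}}=\b{\dot{w}}^{*}$ by construction and $M\b{\dot{w}}=\tfrac{d}{dt}\b{c}$ automatically; moreover for each $j\in\cI$ with $w_j=0$ — that is $j\in\cI\cap Z$ — feasibility of $\b{\dot{w}}^{*}$ in $Q$ gives $\dot w_j\ge 0$. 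Thus \cref{dynstandard} holds for this $\cI$.

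It remains to check optimality over $\cB$. Let $\cI'\in\cB$ be any basic index set for $\b{w}$ satisfying \cref{dynstandard}, and put $\b{p}':=\b{w}_{\cI'}(\tfrac{d}{dt}\b{c})$. Then $M\b{p}'=\tfrac{d}{dt}\b{c}$; on $Z\setminus\cI'$ we have $p'_j=0$; and on $Z\cap\cI'$ the sign condition in \cref{dynstandard} gives $p'_j\ge 0$. Hence $\b{p}'$ is feasible for $Q$, so $\b{\gamma}\cdot\b{p}'\le\b{\gamma}\cdot\b{\dot{w}}^{*}=\b{\gamma}\cdot\b{w}_{\cI}(\tfrac{d}{dt}\b{c})$, which says exactly that $\b{\dot{w}}$ is optimal over the possible basic index sets for $\b{w}$. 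This proves the lemma; note that when \cref{standard} has a nondegenerate optimal solution $\cB$ is a singleton and the statement is immediate, so all the content is in the degenerate case, where — as flagged — the only genuinely delicate step is the feasibility of $Q$.
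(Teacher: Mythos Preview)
Your route is essentially the paper's: both reduce the question to a basic optimal solution of an auxiliary ``directional'' LP, then pivot to obtain a basis of the right shape. Your $Q$ is a close cousin of the paper's \cref{reduceddyn}; the cosmetic difference is that you keep all variables and relax sign constraints off $Z=\{j:w_j=0\}$, whereas the paper restricts to the tight constraints from the outset. Your recession-cone argument for boundedness of $Q$ is a clean addition the paper omits, and your optimality-over-$\cB$ paragraph is more explicit than the paper's one-line appeal to finiteness. The step where you pivot every free variable into the basis is precisely what the paper carries out by hand in its last two paragraphs (the ``exactly $r$ non-basic slack variables'' argument), so you are invoking the same mechanism at a higher level of abstraction.

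On the point you flag: you are right that feasibility of $Q$ is the crux in the degenerate case, and you should not expect the paper to do better here --- its proof simply \emph{assumes} \cref{reduceddyn} has a basic optimal solution and never argues feasibility or boundedness separately. In context this is implicitly covered by the standing hypothesis that the dynamic FBA problem remains solvable on the time interval in question (the polytope stays nonempty as $\b{c}$ varies), which is exactly the assumption you invoke. So your proposal matches the paper in both strategy and level of rigor; neither fully closes the feasibility gap, and your explicit acknowledgment of it is, if anything, more honest than the paper's treatment.
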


\begin{proof}
For convenience, we now restate \cref{standard}:
\begin{equation*}
\left\{ \begin{array}{c}
\max(\b{\tilde{\phi}} \cdot \b{\tilde{\gamma}})\\
\begin{bmatrix}\tilde{A} & I\end{bmatrix}\begin{bmatrix}\b{\tilde{\phi}}\\\b{s}\end{bmatrix} = \b{c}\\
\tilde{\phi}_i \geq 0, s_i \geq 0
\end{array}\right\}
\end{equation*}
where we write $(\b{\tilde{\phi}},\b{s}) = \b{w}$.

We note that there is a finite number of basic index sets for $\b{w}$, and so we need only show that there exists $\cI$ such that \cref{dynstandard} holds. Then, the existence of an optimal such $\cI$ follows trivially.

If $\b{w}$ is not degenerate, then the unique choice of basic index set $\cI$ satisfies \cref{dynstandard}. To see this, simply note that if $\b{w}$ is non-degenerate, then for every $i\in \cI$, $w_i > 0$. Thus, \cref{dynstandard} only includes non-negativity constraints on $\dot{w}_i$ if $i \not \in \cI$, and for any $i \not \in \cI$, $\dot{w}_i = 0$. Thus, the non-negativity constraints are enforced. The equality constraints are enforced by the definition of $\b{w}_{\cI}(\b{a})$ given in \cref{basiscalc}, which implies that $[\tilde{A}\;I]\b{w}_{\cI}(\b{a}) = \b{a}$ for any vector $\b{a}\in \bR^n$. 

In the case of a degenerate solution $\b{w}$, we use the following procedure to choose a set of basic variables. Let $\cJ \subset \{1,...,n\}$ be the indices of the $n_1$ slack variables such that $s_j = 0$ if $j\in \cJ$ (recalling that each $s_i$ is a component of the vector $\b{w}$). Then, let $\tilde{A}_{\cJ}$ be the matrix with rows $m_j$ of $\tilde{A}$ for $j\in \cJ$. Next, let $\cJ^*$ be the indices of the $n_2$ non-slack variables such that $\phi_j = 0$ and $I_{\cJ^*}$ the corresponding rows of the identity matrix $I$. Notice that we now have that
\begin{equation}
M\b{\tilde{\phi}} =  \begin{bmatrix}
\tilde{A}_{\cJ}\\
-I_{\cJ^*} \end{bmatrix}
\b{\tilde{\phi}}  = \begin{bmatrix}\b{c}_{\cJ}\\\b{0}\end{bmatrix}.
\end{equation}
and that if $w_j = 0$ then either $j \in \cJ^*$ or $w_j = s_k$ where $k\in \cJ$ so that $\b{m}_k \cdot \b{\tilde{\phi}}  = c_k$ (i.e. $s_k$ is a slack variable and $s_k = 0$). Notice that because \cref{standard} has a bounded solution, then we can assume without loss of generality that if $M\in \bR^{q\times r}$, then $\mathit{rank}(M) = r$ (i.e. $M$ is full rank) because $\b{w}$ must satisfy at least $r$ linearly independent constraints. If this is not the case, then the problem can be projected onto a lower dimensional subspace.

Consider the linear program
\begin{equation}\label{reduceddyn}
\left\{
\begin{array}{c}
 \max(\b{y}\cdot \b{\gamma})\\
 \begin{bmatrix}
M & I \end{bmatrix} 
 \begin{bmatrix}   \b{y}_{\b{\tilde{\phi}}} \\ \b{y}_{\b{s}}\end{bmatrix}
 = \begin{bmatrix}\frac{d}{dt}\b{c}_{\cJ}\\\b{0}\end{bmatrix}\\
y_j \geq 0
\end{array}
\right\}.
\end{equation}

Assume that there is some basic optimal solution to \cref{reduceddyn} with a basic index set $\hat{\cI}$ such that exactly $r$ slack variables are non-basic, where again $r = |\b{\phi}|$ is the rank of the matrix $M$. This implies that there are $r$ linearly independent rows of $M$ (which we index by $\cJ^{\dag}$) which form an invertible matrix $\tilde{M}$ such that 
\begin{equation}
\tilde{M} \b{y}_{\b{\tilde{\phi}}}  =  \begin{bmatrix} \frac{d}{dt}\b{c}_{\cJ^{\dag}}\\ \b{0}\end{bmatrix} 
\end{equation}
and we can then determine $\b{y}_{\b{s}}$ by
\begin{equation}
\b{y}_{\b{s}} = \begin{bmatrix}\frac{d}{dt}\b{c}_{\cJ}\\\b{0}\end{bmatrix} - M\b{y}_{\b{\tilde{\phi}}}
\end{equation}
and note that each $(\b{y}_{\b{s}})_i \geq 0$. We now rewrite $\b{\dot{w}} = (\b{\dot{w}}_{\b{\tilde{\phi}}}, \b{\dot{w}}_{\b{s}})$ from \cref{dynstandard} and define $\b{\dot{w}}_{\b{\tilde{\phi}}} = \b{y}_{\b{\tilde{\phi}}}$ and 
\begin{equation}
\b{\dot{w}}_{\b{s}} = \frac{d}{dt}\b{c} - M\b{\dot{w}}_{\b{\tilde{\phi}}}
\end{equation}
and conclude that this satisfies the constraints of \cref{dynstandard}. Next, we take $\b{\tilde{\phi}}$ to be the unique solution to 
\begin{equation}
\tilde{M} \b{\tilde{\phi}}  =  \begin{bmatrix} \b{c}_{\cJ^{\dag}}\\ \b{0}\end{bmatrix} 
\end{equation}
and $\b{s} = \b{c} - \tilde{A} \b{\tilde{\phi}} $.

Finally, we take $\cI =( \hat{\cI}\setminus \cJ^* ) \cup \cJ^c$ and note that this basis set enforces exactly the same $r$ linearly independent constraints as $\tilde{M}$\footnote{In practice, we may simply use $\tilde{M}$ to find $\tilde{\b{\phi}}$}.

\medskip

We now prove that there is some basic optimal solution to \cref{reduceddyn} with a basic index set $\hat{\cI}$ such that exactly $r$ slack variables are non-basic, where $r$ is the rank of the matrix $M$. 

First we note that for any basic optimal solution, if there are $r^*>r$ slack variables which are non-basic, then there are $r^*$ rows of $B_{\hat{\cI}}$ which are non-zero only in the columns of ${M}$. Therefore, $B_{\hat{\cI}}$ is not invertible. We can conclude that the number of non-basic slack variables is at most $r$.

Next, suppose $\b{\dot{w}}^*$ is a basic optimal solution with basis $\cI^*$ such that there are $r^*< r$ slack variables which are non-basic. 

We would like to assume that there are at least $r$ slack variables $s_k^*$ corresponding to $r$ linearly independent constraints such that $s_k^* = 0$. Recall that $\tilde{A}$ was formed with repeated (negated) columns in order write the problem in standard form (the non-negativity bounds of \cref{standard} are artificial). Therefore, we can find some vector $\b{x}$ in the kernel of the matrix formed by the rows of $\tilde{A}$ corresponding to zero slacks which also has $\b{x}\cdot \b{\gamma} = 0$. We can therefore find a vector $\b{y}$ in the kernel of
\[
\begin{bmatrix}
\tilde{A}_{\cJ} & I & 0\\
-I_{\cJ^*} & 0 & I\end{bmatrix}
\]
which has $y_k = 0$ if $s_k = 0$ and $y_j \neq 0$ if $s_j \neq 0$ and $s_j$ corresponds to a constraint that is not a linear combination of the constraints corresponding to the $s_k = 0$. There is at least one such constraint as long as the $0$ slack variables correspond to constraints with span less than dimension $r$, and so we can take $\b{\dot{w}} + \lambda \b{y}$ for some $\lambda$ and so increase the number of non-zero slack variables.  We can therefore assume without loss of generality that there are at least $r$ slack variables $s_k^*$ corresponding to $r$ linearly independent constraints such that $s_k^* = 0$, as was desired.

We can finally choose some linearly independent set of $r$ constraints which correspond to $0$ slack variables, and call the matrix whose rows are these constraint vectors $M^*$. Now, because there are $r^*<r$ non-slack basic variables, there is some non-slack, non-basic variable $v_j$ such that the column $m_j^*$ of $M^*$ (and ${m}_j$ of ${M}$) is linearly independent from the columns corresponding to the $r^*$ non-slack basic variables. We can conclude that if 
\begin{equation}
B_{\cI^*}\b{\lambda} = {m}_j
\end{equation}
then there is some $\lambda_k  \neq 0$ where $k$ corresponds to the index of a slack variable with $s_k = 0$. We can remove $k$ from the basic index set and add $j$ without changing $\b{\dot{w}}^*$, and therefore preserving optimality and feasibility. We have then increased the number of non-basic slack variables, and we can repeat if necessary to form $\hat{\cI}$ with exactly $r$ non-basic slack variables.

 \end{proof}
\end{appendix}

\end{document}